%% file: main.tex
\documentclass[conference]{IEEEtran}
\usepackage{soul}

\newcommand{\edit}[1]{{\color{black} #1}}

\IEEEoverridecommandlockouts
\input{my_preamble.tex}
\usepackage{amsmath,amssymb,amsfonts}
\usepackage{array}
\usepackage[caption=false,font=normalsize,labelfont=sf,textfont=sf]{subfig}
\usepackage{textcomp}
\usepackage{stfloats}
\usepackage{url}
\usepackage{verbatim}
\usepackage{graphicx}
\usepackage{cite}
\usepackage{color}
\usepackage[top=0.75in, bottom=1.05in, left=0.625in, right=0.625in, columnsep=0.21in]{geometry}
\usepackage[acronym,shortcuts]{glossaries}
\newacronym{3GPP}{3GPP}{3rd Generation Partnership Project}
\newacronym{AWGN}{AWGN}{additive white Gaussian noise}
\newacronym{BS}{BS}{base station}
\newacronym{CDF}{CDF}{cumulative distribution function}
\newacronym{CSCG}{CSCG}{circularly symmetric complex Gaussian}
\newacronym{CSI}{CSI}{channel state information}
\newacronym{iid}{i.i.d.}{independent and identically distributed}
\newacronym{LoS}{LoS}{line-of-sight}
\newacronym{MIMO}{MIMO}{multiple-input multiple-output}
\newacronym{MMSE}{MMSE}{minimum mean-squared error}
\newacronym{NCR}{NCR}{network-controlled repeater}
\newacronym{NLoS}{NLoS}{non line-of-sight}
\newacronym{PDF}{PDF}{probability density function}
\newacronym{RAMIMO}{RA-MIMO}{repeater-assisted massive MIMO}
\newacronym{SE}{SE}{spectral efficiency}
\newacronym{SINR}{SINR}{signal-to-interference-plus-noise ratio}
\newacronym{TDD}{TDD}{time-division duplex}
\newacronym{UE}{UE}{user equipment}

\begin{document}

\title{Repeater-Assisted Massive MIMO Downlink Performance with Calibration Errors\vspace*{-.3\baselineskip}}
\author{\IEEEauthorblockN{Kohei Ueda\IEEEauthorrefmark{1}, Anubhab Chowdhury\IEEEauthorrefmark{2}, Koji Ishibashi\IEEEauthorrefmark{1}, and Erik G. Larsson\IEEEauthorrefmark{2}}\\
\IEEEauthorblockA{\IEEEauthorrefmark{1}Advanced Wireless \& Communication Research Center (AWCC), The University of Electro-Communications, \\1-5-1 Chofugaoka, Chofu, Tokyo 182-8585, Japan\\
\IEEEauthorrefmark{2}Department of Electrical Engineering (ISY), Linköping University, 58183 Linköping, Sweden\\
  Emails: \IEEEauthorrefmark{1}k.ueda@awcc.uec.ac.jp,  \IEEEauthorrefmark{2}anuch87@liu.se, \IEEEauthorrefmark{1}koji@ieee.org, and \IEEEauthorrefmark{2}erik.g.larsson@liu.se
\vspace*{-1.3\baselineskip}}
\thanks{{This work was supported in part by the KAW foundation, ELLIIT, the Swedish Research Council (VR), JST CRONOS, Japan Grant Number JPMJCS24N1, and JST BOOST, Japan Grant Number JPMJBS2415.}}
}



\maketitle

\begin{abstract}
Reciprocity-based downlink beamforming is imperative for a scalable time-division duplex massive multiple-input multiple-output~(MIMO) deployment. Specifically, for a dual-antenna repeater-assisted massive MIMO system, a mismatch between forward and reverse path gains at the repeater can exacerbate the overall calibration error between the user equipments (UEs) and the base station (BS), which potentially also contains calibration errors of their individual radio-frequency chains. This paper models the effects of such calibration errors, underpins the relations between the uplink and downlink channels for repeater-assisted systems with calibration errors clubbed with the over-the-air channel estimation errors, and derives analytical expressions of the downlink spectral efficiency. The presented results can then be simplified to several special cases, underscoring situations wherein such errors can become pronounced.
\end{abstract}

\begin{IEEEkeywords}
Repeater, Massive MIMO, Reciprocity calibration, TDD.
\end{IEEEkeywords}

\section{Introduction}


{Repeaters are full-duplex devices that instantaneously receive and retransmit signals within a given band, in a way transparent to the network, acting as ordinary channel scatterers but with amplification.}
Repeaters can be deployed on a large scale to mitigate coverage holes for cellular \gls{MIMO} wireless systems~\cite{Sara_Vieira_Erik_Mag, Byoung_Magazine, Sergiy_WCL, Ozan_repeaters, 2025_Jiannan_repeater, 2026_ueda_RAMIMO_TDDvsFDD}.
In \cite{Sara_Vieira_Erik_Mag}, it is demonstrated that repeaters procure \gls{SE}  similar to distributed \gls{MIMO} while obviating the front-haul signaling overhead required by the latter. Specifically, dual-antenna repeater technology has been included in the 5G NR standards since \gls{3GPP} Release $18$~\cite{3gpp}, named as \gls{NCR}.

For dual-antenna repeaters to appear transparent to the \gls{BS} and the \glspl{UE}, they need to be reciprocity calibrated, i.e., the forward and the reverse path gains should be equal.
In this regard,~\cite{Erik_WCL} developed a non-linear least-squares framework to estimate the ratio of the forward to reverse path gains of a repeater, following which a compensation for the mismatch in the uplink/downlink antenna path gains can be executed.
Further, the propounded strategy in~\cite{Erik_WCL} also extends to multiple repeaters. 

Nonetheless, the over-the-air calibration techniques heavily rely on the channel estimates, which in turn are prone to errors. When reciprocity-based beamforming is applied in the downlink massive MIMO system, the calibration error degrades the beamforming gain, and consequently the overall downlink \gls{SE}~\cite{Luo_Calibration, Adve_TCOM, 2018_Emil_reciprocityError, Ribhu_CRM_EGL_TSP}.
With regard to repeater-assisted systems,~\cite{Yiming_PIMRC} investigated the effects of mismatched forward and reverse link gains at the repeaters on the downlink \gls{SINR}. However, the results therein ignore the direct link channel between the \gls{BS} and the \acp{UE}, and are restricted to errors only in the repeaters' calibration coefficients, and highly depend on large-dimensional approximations.  

Our proposed framework considers both channel estimation and calibration errors in all transceiver chains
between the \gls{BS} and the \acp{UE}, i.e., the direct link channels and the channels via the repeater. Thus, the model presented in~\cite{Yiming_PIMRC} can be subsumed as a special case of our analysis. 
{We obtain expressions for the downlink \ac{SE} via achievable rate analysis for finite numbers of antennas. These results quantify} the composite effects of calibration errors at the \gls{BS}, the \acp{UE}, and the repeaters. 
\par
\emph{Notation:}~The identity matrix of size $N$ and the all-zero column vector of size $N$ are denoted by $\mathbf{I}_N$ and $\mathbf{0}_N$, respectively.
A \ac{CSCG} with mean vector $\boldsymbol{\mu}$ and covariance matrix $\boldsymbol{\Sigma}$ is denoted by $\mathcal{CN}(\boldsymbol{\mu},\boldsymbol{\Sigma})$.
The operators $(\cdot)^{*}$, $(\cdot)^{\mathrm{T}}$, $(\cdot)^{\mathrm{H}}$, and $\odot$ denote conjugate, transpose, Hermitian transpose, and element-wise product, respectively.
The operators $\mathbb{E}[\cdot]$, $\mathrm{Var}(\cdot)$, $\mathrm{Tr}\{\cdot\}$ and $\|\cdot\|$ are expectation, variance, trace of a matrix, and $\ell_2$-norm, respectively.
For a diagonal matrix $\mathbf{D}$, $|\mathbf{D}|$ implies element-wise absolute values.
%
%
%
\section{System Model}
\begin{figure}[t]
	\centering
		\begin{tikzpicture}
		\newdimen\R
		\R=2cm
		\draw (-4,3) pic (BSt) {bs} node [xshift = 0mm, yshift=12mm] {$\mathbf{t}_{\mathrm{dl}}\in\mathbb{C}^{N}$};
		\draw (1.7,.7) pic (UE) {user};
		\draw (2.4,1.7) pic (UE2) {user};		
		\draw (2,3.1) pic (UE3) {user} node [yshift=-8mm, xshift=2.3mm]{~$k$-th User~$\bbr{t_{\mathrm{U}, k}, r_{\mathrm{U}, k}}$};
		\draw (-2,1) pic (R) {repeater} node [yshift=-4mm]{Repeater: \{$\alpha_{\sfu},\alpha_{\sfd}\}$};
		\node  [align=center,below=-4mm of BStdim.south] {\gls{MIMO} \gls{BS}\\ $\bbr{\mathbf{T}_{\mathrm{B}},\mathbf{R}_{\rm B}}$};
		
		\draw[-{Stealth[length=3.5mm,  open, round]},black, dashed, thick,shorten >= 0.3cm] (-4,3.8) to (-2.2,1.7) node [xshift=-5mm, yshift=13mm]{$\mathbf{g}^{\rm T}$};
		
		\draw[-{Stealth[length=3.5mm,  open, round]},black, dashed, thick,shorten >= 0.3cm] (-4,3.8) to (2,3.1) node [xshift=-15mm, yshift=0mm]{$\mathbf{h}_{k}$};
		
		\draw[-{Stealth[length=3.5mm,  open, round]},black, dashed, thick,shorten >= 0.3cm] (-1.4,1.9) to (2,3.1) node [xshift=-20mm, yshift=-4mm]{$f_{k}$};

\draw[-{Stealth[length=3.5mm,  open, round]},black, dashed, thick,shorten >= 0.3cm] (-1.4,1.9) to (2.4,1.7);

\draw[-{Stealth[length=3.5mm,  open, round]},black, dashed, thick,shorten >= 0.3cm] (-1.4,1.9) to (1.7,1);
		\end{tikzpicture}
	\caption{Repeater-assisted communication. Dashed lines~($--$) indicate over-the-air channels that are reciprocal. Solid lines~($-$) in blue and red along the repeater show the forward and reverse path gains of the transceiver chains.}\label{fig: system_model
	}
\end{figure}

We consider a dual-antenna \gls{RAMIMO} system where $K$ \acp{UE} are being served by a massive \gls{MIMO} \gls{BS}~\cite{Erik_WCL}, operating in \gls{TDD}. The \gls{BS} is equipped with $N$ antennas, and \acp{UE} are all single-antenna. 
The diagonal matrices $\mathbf{T}_{\rm B}$ and $\mathbf{R}_{\rm B}\in\mathbb{C}^{N\times N}$ consist of the transmit and receive reciprocity coefficients, with the $n$-th diagonal element being $\left\{t_{\mathrm{B}, n}\right\}$ and $\left\{r_{\mathrm{B}, n}\right\}$, corresponding to the $n$-th \gls{BS} antenna, respectively.
Similarly, the corresponding coefficients for the $k$-th \gls{UE} are denoted by $\left\{t_{\mathrm{U}, k}\right\}$ and $\left\{r_{\mathrm{U}, k}\right\}$~\cite{Vieira_TWC}.
Let $\alpha_{\mathrm{u}}$ and $\alpha_{\mathrm{d}}$ be the forward and reverse link complex-valued gains of the repeater~\cite{Erik_WCL}.
Next, the over-the-air (uplink) channels from the $k$-th \gls{UE} to the \gls{BS}, the repeater to the \gls{BS}, and the repeater to the $k$-th \gls{UE} are denoted by $\mathbf{h}_{k}\in\mathbb{C}^{N\times 1}$, $\mathbf{g}\in\mathbb{C}^{N\times 1}$, and $f_{k}\in\mathbb{C}$, respectively. These over-the-air channels are reciprocal. Thus, the equivalent uplink channel from the $k$-th \gls{UE} to the \gls{BS} can be written as:
\begin{align}
\mathbf{h}_{\mathrm{ul}, k}=\bar{\mathbf{h}}_{\mathrm{ul}, k}+\bar{\mathbf{g}}_{\mathrm{ul},k},
\end{align}
where the direct link equivalent channel $\bar{\mathbf{h}}_{\mathrm{ul},k}$ and the over-the-air uplink channel $\mathbf{h}_{k}$ are  related as 
\begin{align}
    \bar{\mathbf{h}}_{\mathrm{ul}, k}=t_{\mathrm{U}, k}\mathbf{R}_{\rm B}\mathbf{h}_{k}.
\end{align}
We note that the over-the-air channel $\mathbf{h}_{k}$ is measured with the repeater being turned off; however, it can contain (reciprocal) passive scattering effects of the repeater. The equivalent uplink channel from the \gls{UE} to the \gls{BS} via the repeater $\bar{\mathbf{g}}_{\mathrm{ul},k}$ is
\begin{align}
    \bar{\mathbf{g}}_{\mathrm{ul},k}=\alpha_{\mathrm{u}} t_{\mathrm{U}, k}\mathbf{R}_{\rm B}\left\{\mathbf{g} f_{k}\right\}.
\end{align}
Next, the equivalent downlink channel for the $k$-th \gls{UE} can be expressed as
\begin{align}
\mathbf{h}_{\mathrm{dl}, k}^{\rm T} &= r_{\mathrm{U}, k}\mathbf{}\mathbf{h}_{k}^{\rm T}\mathbf{T}_{\mathrm{B}}+\alpha_{\mathrm{d}} r_{\mathrm{U}, k}\left\{f_{k}\mathbf{g}^{\rm T}\right\}\mathbf{T}_{\mathrm{B}}\notag\\
&=\frac{r_{\mathrm{U}, k}}{t_{\mathrm{U}, k}}\left(\bar{\mathbf{h}}_{\mathrm{ul}, k}^{\rm T}+\frac{\alpha_{\mathrm{d}}}{\alpha_{\mathrm{u}}}\bar{\mathbf{g}}_{\mathrm{ul},k}^{\rm T}\right)\mathbf{R}_{\rm B}^{-1}\mathbf{T}_{\mathrm{B}},\label{eq: ul_dl}
\end{align}
where we used the fact that $\mathbf{T}_{\rm B}$ and $\mathbf{R}_{\rm B}$ are diagonal. To capture the mismatches in the calibration coefficients, we define $e_{\mathrm{U}, k}=\frac{r_{\mathrm{U}, k}}{t_{\mathrm{U}, k}}$, $\mathbf{E}_{\mathrm{B}}=\mathbf{R}_{\rm B}^{-1}\mathbf{T }_{\mathrm{B}}$ with the $n$-th diagonal element being $e_{\mathrm{B}, n}$, and $e_{\mathrm{R}}=\frac{\alpha_{\mathrm{d}}}{\alpha_{\mathrm{u}}}$. Thus, after taking transpose of~\eqref{eq: ul_dl}, we get
\begin{align}
\mathbf{h}_{\mathrm{dl}, k}=e_{\mathrm{U}, k}\mathbf{E}_{\mathrm{B}}\left(\bar{\mathbf{h}}_{\mathrm{ul}, k}+e_{\mathrm{R}}\bar{\mathbf{g}}_{\mathrm{ul},k}\right),
\end{align}
which establishes the relation between the equivalent uplink and downlink channels for repeater-assisted systems with reciprocity imperfections.
\subsubsection{Modeling Calibration Errors} In practice, $e_{\mathrm{U}, k}$, $\mathbf{E}_{\mathrm{B}}$, and $e_{\mathrm{R}}$ need to be calibrated and compensated~\cite{Erik_WCL}, so that the repeater acts transparent to the \acp{UE} and the \gls{BS}, and \gls{BS} can use reciprocity-based beamforming. 
However, the calibration errors cannot be avoided due to the noisy measurements during the over-the-air calibration process.
We capture such errors in calibration as \cite{Vieira_TWC}: $e_{\mathrm{U}, k}=\hat{e}_{\mathrm{U}, k}+\tilde{e}_{\mathrm{U}, k}$, $\mathbf{E}_{\mathrm{B}}=\hat{\mathbf{E}}_{\mathrm{B}}+\tilde{\mathbf{E}}_{\mathrm{B}}$, and $e_{\mathrm{R}}=\hat{e}_{\mathrm{R}}+\tilde{e}_{\mathrm{R}}$, where $\hat{e}_{\mathrm{U}, k}$, $\hat{\mathbf{E}}_{\mathrm{B}}$, and $\hat{e}_{\mathrm{R}}$ are the estimated ``mismatched" coefficients and assumed to be \emph{known}~\cite{Adve_TCOM}. The errors in the calibration are captured by $\tilde{e}_{\mathrm{U}, k}$, $\tilde{\mathbf{E}}_{\mathrm{B}}$, and $\tilde{e}_{\mathrm{R}}$, which are zero mean random variables with variances/covariance being $\sigma^2_{\mathrm{U}, k}$, $\sigma^2_{\mathrm{B}}\vI_{N}$, and $\sigma^2_{\mathrm{R}}$, respectively. Also, calibration errors across \acp{UE} are independent.

\subsubsection{Impact on Uplink Channel Estimation}
As a necessary first step to understand the effects of calibration error on system performance, we consider a noiseless repeater and abstract the channel estimation procedure. We consider $\bar{\mathbf{h}}_{\mathrm{ul}, k}=\hat{\bar{\mathbf{h}}}_{\mathrm{ul}, k}+\tilde{\bar{\mathbf{h}}}_{\mathrm{ul}, k}$ and $\bar{\mathbf{g}}_{\mathrm{ul},k}=\hat{\bar{\mathbf{g}}}_{\mathrm{ul},k}+\tilde{\bar{\mathbf{g}}}_{\mathrm{ul},k}$, where $\hat{\mathbf{x}}$ denotes the estimate of a variable $\mathbf{x}$, and $\tilde{\mathbf{x}}$ the estimation error. Then, the equivalent downlink channel in~\eqref{eq: dl_channel} becomes
\begin{align}
\mathbf{h}_{\mathrm{dl}, k}=&\left(\hat{e}_{\mathrm{U}, k}+\tilde{e}_{\mathrm{U}, k}\right)\left(\hat{\mathbf{E}}_{\mathrm{B}}+\tilde{\mathbf{E}}_{\mathrm{B}}\right)\left(\left\{\hat{\bar{\mathbf{h}}}_{\mathrm{ul}, k}+\tilde{\bar{\mathbf{h}}}_{\mathrm{ul}, k}\right\}\right.\notag\\
&\left.+\left(\hat{e}_{\mathrm{R}}+\tilde{e}_{\mathrm{R}}\right)\left\{\hat{\bar{\mathbf{g}}}_{\mathrm{ul},k}+\tilde{\bar{\mathbf{g}}}_{\mathrm{ul},k}\right\}\right)\triangleq\hat{\mathbf{h}}_{\mathrm{dl}, k}+\tilde{\mathbf{h}}_{\mathrm{dl}, k},\label{eq: dl_channel}
\end{align}
where, with a little algebra, the estimated equivalent downlink channel and the error vector can respectively be expressed as:
\begin{subequations}
\begin{align}
& \hat{\mathbf{h}}_{\mathrm{dl}, k}=\hat{e}_{\mathrm{U}, k}\hat{\mathbf{E}}_{\mathrm{B}}\left(\hat{\bar{\mathbf{h}}}_{\mathrm{ul}, k}+\hat{e}_{\mathrm{R}}\hat{\bar{\mathbf{g}}}_{\mathrm{ul},k}\right) = \hat{e}_{\mathrm{U}, k}\hat{\mathbf{E}}_{\mathrm{B}}\hat{\mathbf{h}}_{\mathrm{ul},k} \label{eq: hat_dl_effective}\\
&\tilde{\mathbf{h}}_{\mathrm{dl}, k}=\tilde{e}_{\mathrm{U}, k}\hat{\mathbf{E}}_{\mathrm{B}}\hat{\mathbf{h}}_{\mathrm{ul}, k}+{e}_{\mathrm{U}, k}(\tilde{\mathbf{E}}_{\mathrm{B}}\tilde{\mathbf{h}}_{\mathrm{ul}, k}+\hat{\mathbf{E}}_{\mathrm{B}}\tilde{\mathbf{h}}_{\mathrm{ul}, k}+\tilde{\mathbf{E}}_{\mathrm{B}}\hat{\mathbf{h}}_{\mathrm{ul}, k}),\label{eq: tilde_dl_effective}
\end{align}
\end{subequations} 
where $\hat{\mathbf{h}}_{\mathrm{ul}, k}\triangleq\hat{\bar{\mathbf{h}}}_{\mathrm{ul}, k}+\hat{e}_{\mathrm{R}}\hat{\bar{\mathbf{g}}}_{\mathrm{ul},k}$ and $\tilde{\mathbf{h}}_{\mathrm{ul}, k}\triangleq \tilde{\bar{\mathbf{h}}}_{\mathrm{ul}, k}+{e}_{\mathrm{R}}\tilde{\bar{\mathbf{g}}}_{\mathrm{ul},k}+\tilde{e}_{\mathrm{R}}\hat{\bar{\mathbf{g}}}_{\mathrm{ul},k}.$
It is clear from~\eqref{eq: tilde_dl_effective} that $ \tilde{\mathbf{h}}_{\mathrm{dl}, k}$ is also affected by the uplink estimated channels as well as the errors due to the multiplicative nature of the calibration coefficients. Thus, a calibration error can potentially increase the variance of the error in the effective estimated downlink channel, thereby degrading the \gls{SE}, which is discussed next. 
\section{Effects of  Calibration Error on Downlink \ac{SE}}
{In this section, we analyze the ergodic capacity with respect to channels, estimation errors of the channel and calibration coefficient, and noise, conditioned on a given estimate of the calibration coefficients.}
Using $\hat{\mathbf{h}}_{\mathrm{dl}, k}$ in~\eqref{eq: hat_dl_effective}, the signal received at the $k$-th \gls{UE} can be expressed as:
\begin{align}
\hat{s}_{k}=\sqrt{\rho\kappa_{k}}\mathbf{h}_{\mathrm{dl}, k}^{\rm T}\hat{\mathbf{h}}_{\mathrm{dl}, k}^{*}s_{k}+\sum_{n\neq k}\sqrt{\rho\kappa_{n}}\mathbf{h}_{\mathrm{dl}, k}^{\rm T}\hat{\mathbf{h}}_{\mathrm{dl}, n}^{*}s_{n}+w_{k},
\end{align}
where $\rho$ is the total downlink transmit power, $\kappa_{k}=\frac{\eta_k}{\mathrm{Tr}\{\mathbb{E}[\hat{\mathbf{h}}_{\mathrm{dl},k}\hat{\mathbf{h}}_{\mathrm{dl},k}^{\rm H}]\}},~\eta_k\in\bsr{0,1}$ is the power allocation variable including channel gain normalization of the $k$-th \gls{UE}, and $s_{k}$ is the unit-energy zero-mean data symbol intended for the $k$-th \gls{UE}.
$w_{k}\sim\mathcal{CN}(0, \sigma_{\rm UE}^2)$ is the \gls{AWGN} at the $k$-th \gls{UE}.
Further, we can rewrite $\hat{s}_{k}$, following the principle of use-and-then-forget bound~\cite{Marzetta_Larsson_Yang_Ngo_2016}, as 
\begin{align}
\hat{s}_{k}=\sqrt{\rho\kappa_{k}}\,\mathbb{E}\left[\hat{\mathbf{h}}_{\mathrm{dl}, k}^{\rm T}\hat{\mathbf{h}}_{\mathrm{dl}, k}^{*}\right]s_{k}+z_{k},
\end{align}
\edit{where the expectation is taken with respect to the randomness in the channels and the calibration errors.}
Then, the effective interference-plus-\gls{AWGN} $z_{k}$ constitutes
\begin{align}
z_{k}&=\sqrt{\rho\kappa_{k}}\left(\hat{\mathbf{h}}_{\mathrm{dl}, k}^{\rm T}\hat{\mathbf{h}}_{\mathrm{dl}, k}^{*}-\mathbb{E}\left[\hat{\mathbf{h}}_{\mathrm{dl}, k}^{\rm T}\hat{\mathbf{h}}_{\mathrm{dl}, k}^{*}\right]\right)s_{k}\\
&\hspace{2ex}+\sqrt{\rho\kappa_{k}}{\tilde{\mathbf{h}}_{\mathrm{dl}, k}^{\rm T}\hat{\mathbf{h}}_{\mathrm{dl}, k}^{*}}s_{k}+\sum_{n \neq k}\sqrt{\rho\kappa_{n}}\mathbf{h}_{\mathrm{dl}, k}^{\rm T}\hat{\mathbf{h}}_{\mathrm{dl}, n}^{*}s_{n}+w_{k}.\nonumber
\end{align}
\begin{prop}\label{prop_uncorrelated}
If the estimated uplink channels $\hat{\bar{\mathbf{h}}}_{\mathrm{ul}, k}$ and $\hat{\bar{\mathbf{g}}}_{\mathrm{ul},k}$ are uncorrelated with the errors $\tilde{\bar{\mathbf{h}}}_{\mathrm{ul}, k}$ and $\tilde{\bar{\mathbf{g}}}_{\mathrm{ul},k}$, it can be shown that $\hat{\mathbf{h}}_{\mathrm{dl}, k}$ is uncorrelated with $\tilde{\mathbf{h}}_{\mathrm{dl}, k}$; and consequently, $z_{k}$ is uncorrelated with $\mathbb{E}[\hat{\mathbf{h}}_{\mathrm{dl}, k}^{\rm T}\hat{\mathbf{h}}_{\mathrm{dl}, k}^{*}]s_{k}$.
\end{prop}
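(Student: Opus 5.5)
The plan is to compute $\mathbb{E}[\tilde{\mathbf{h}}_{\mathrm{dl},k}\hat{\mathbf{h}}_{\mathrm{dl},k}^{\rm H}]$ term by term from \eqref{eq: tilde_dl_effective} and show that each term vanishes. Throughout we condition on the known calibration estimates $\hat{e}_{\mathrm{U},k}$, $\hat{\mathbf{E}}_{\mathrm{B}}$, $\hat{e}_{\mathrm{R}}$, so that $\hat{\mathbf{h}}_{\mathrm{dl},k}=\hat{e}_{\mathrm{U},k}\hat{\mathbf{E}}_{\mathrm{B}}\hat{\mathbf{h}}_{\mathrm{ul},k}$ is random only through $\hat{\bar{\mathbf{h}}}_{\mathrm{ul},k}$ and $\hat{\bar{\mathbf{g}}}_{\mathrm{ul},k}$. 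We use two standing assumptions. First, the calibration errors $\tilde{e}_{\mathrm{U},k}$, $\tilde{\mathbf{E}}_{\mathrm{B}}$, $\tilde{e}_{\mathrm{R}}$ are zero mean and independent of each other and of the over-the-air channels and their estimates. Second, by hypothesis, $\hat{\bar{\mathbf{h}}}_{\mathrm{ul},k},\hat{\bar{\mathbf{g}}}_{\mathrm{ul},k}$ are uncorrelated with $\tilde{\bar{\mathbf{h}}}_{\mathrm{ul},k},\tilde{\bar{\mathbf{g}}}_{\mathrm{ul},k}$, and the estimation errors are zero mean.

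\emph{Step 1.} We first handle the uplink error. Writing $e_{\mathrm{R}}=\hat{e}_{\mathrm{R}}+\tilde{e}_{\mathrm{R}}$, we have
\[
\tilde{\mathbf{h}}_{\mathrm{ul},k}=\tilde{\bar{\mathbf{h}}}_{\mathrm{ul},k}+\hat{e}_{\mathrm{R}}\tilde{\bar{\mathbf{g}}}_{\mathrm{ul},k}+\tilde{e}_{\mathrm{R}}\bigl(\tilde{\bar{\mathbf{g}}}_{\mathrm{ul},k}+\hat{\bar{\mathbf{g}}}_{\mathrm{ul},k}\bigr).
\]
Multiplying by $\hat{\mathbf{h}}_{\mathrm{ul},k}^{\rm H}$ and taking expectations, the first two terms vanish by the uncorrelatedness hypothesis. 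The terms containing $\tilde{e}_{\mathrm{R}}$ vanish because $\tilde{e}_{\mathrm{R}}$ is zero mean and independent of the remaining factors, so the expectation factorizes as $\mathbb{E}[\tilde{e}_{\mathrm{R}}]\,\mathbb{E}[\cdot]=0$. Hence $\mathbb{E}[\tilde{\mathbf{h}}_{\mathrm{ul},k}\hat{\mathbf{h}}_{\mathrm{ul},k}^{\rm H}]=\mathbf{0}$.

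\emph{Step 2.} We then treat the four terms of \eqref{eq: tilde_dl_effective}, each multiplied on the right by $\hat{\mathbf{h}}_{\mathrm{dl},k}^{\rm H}=\hat{e}_{\mathrm{U},k}^{*}\hat{\mathbf{h}}_{\mathrm{ul},k}^{\rm H}\hat{\mathbf{E}}_{\mathrm{B}}^{\rm H}$.
\begin{itemize}
\item The term with $\tilde{e}_{\mathrm{U},k}$ factors as $\mathbb{E}[\tilde{e}_{\mathrm{U},k}]\,(\cdot)=0$.
\item The terms containing $\tilde{\mathbf{E}}_{\mathrm{B}}$ vanish because $\tilde{\mathbf{E}}_{\mathrm{B}}$ is zero mean and independent. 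The factor $e_{\mathrm{U},k}=\hat{e}_{\mathrm{U},k}+\tilde{e}_{\mathrm{U},k}$ is independent of $\tilde{\mathbf{E}}_{\mathrm{B}}$, so the expectation contains $\mathbb{E}[\tilde{\mathbf{E}}_{\mathrm{B}}]=\mathbf{0}$, and being diagonal does not affect the factorization.
\item The remaining term $e_{\mathrm{U},k}\hat{\mathbf{E}}_{\mathrm{B}}\tilde{\mathbf{h}}_{\mathrm{ul},k}\hat{\mathbf{h}}_{\mathrm{ul},k}^{\rm H}\hat{\mathbf{E}}_{\mathrm{B}}^{\rm H}\hat{e}_{\mathrm{U},k}^{*}$ has expectation
\[
\hat{e}_{\mathrm{U},k}\hat{\mathbf{E}}_{\mathrm{B}}\,\mathbb{E}\bigl[\tilde{\mathbf{h}}_{\mathrm{ul},k}\hat{\mathbf{h}}_{\mathrm{ul},k}^{\rm H}\bigr]\hat{\mathbf{E}}_{\mathrm{B}}^{\rm H}\hat{e}_{\mathrm{U},k}^{*}=\mathbf{0}
\]
by Step 1, after using $\mathbb{E}[\tilde{e}_{\mathrm{U},k}]=0$ to drop the $\tilde{e}_{\mathrm{U},k}$ part of $e_{\mathrm{U},k}$.
\end{itemize}
This establishes $\mathbb{E}[\tilde{\mathbf{h}}_{\mathrm{dl},k}\hat{\mathbf{h}}_{\mathrm{dl},k}^{\rm H}]=\mathbf{0}$.

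\emph{Step 3.} We show that $\mathbb{E}[z_k\,(\mathbb{E}[\hat{\mathbf{h}}_{\mathrm{dl},k}^{\rm T}\hat{\mathbf{h}}_{\mathrm{dl},k}^{*}]s_k)^{*}]=0$, taking each term of $z_k$ in turn.
\begin{itemize}
\item The centered self-term has zero mean and multiplies $|s_k|^2$; since $s_k$ is independent of the channels, its contribution is zero.
\item Each cross term with $s_n$, $n\neq k$, vanishes because $\mathbb{E}[s_ns_k^*]=0$. Independent zero-mean data are the standard assumption here.
\item The noise term vanishes because $w_k$ is zero mean and independent.
\item The term $\tilde{\mathbf{h}}_{\mathrm{dl},k}^{\rm T}\hat{\mathbf{h}}_{\mathrm{dl},k}^{*}s_k$ has expectation $\mathrm{Tr}\{\mathbb{E}[\hat{\mathbf{h}}_{\mathrm{dl},k}^{*}\tilde{\mathbf{h}}_{\mathrm{dl},k}^{\rm T}]\}$, which is zero by Step 2.
\end{itemize}

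The main obstacle is Step 2. The product structure means some error terms, such as $e_{\mathrm{U},k}\tilde{\mathbf{E}}_{\mathrm{B}}\tilde{\mathbf{h}}_{\mathrm{ul},k}$ and $\tilde{e}_{\mathrm{R}}\hat{\bar{\mathbf{g}}}_{\mathrm{ul},k}$, are correlated with the estimates in a second-order sense; for example, their variances depend on $\hat{\mathbf{h}}_{\mathrm{ul},k}$. They are therefore not independent of the estimates, only uncorrelated with them. I would make explicit that the argument needs only first-order cancellation, obtained from the zero mean of a multiplicative calibration error that is independent of everything else, and I would state that independence assumption explicitly.
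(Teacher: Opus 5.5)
Your proposal is correct and follows essentially the route the paper intends. The paper gives no separate proof of this proposition (only ``it can be shown''), but in Appendix~\ref{app: SINR_k} it removes the same cross terms using exactly your two ingredients, applied term by term to \eqref{eq: tilde_dl_effective}: MMSE orthogonality between $\hat{\bar{\mathbf{h}}}_{\mathrm{ul},k},\hat{\bar{\mathbf{g}}}_{\mathrm{ul},k}$ and their errors, and the zero mean of the multiplicative calibration errors. You also state explicitly that $\tilde{e}_{\mathrm{U},k},\tilde{\mathbf{E}}_{\mathrm{B}},\tilde{e}_{\mathrm{R}}$ are independent of each other and of the channels, which the paper only assumes implicitly; that is a welcome clarification.
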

{Next, thanks to~\cref{prop_uncorrelated}, a use-and-then-forget type~\cite{Marzetta_Larsson_Yang_Ngo_2016}} lower bound on the downlink \gls{SE} of the $k$-th \gls{UE} can be evaluated as $\mathrm{R}_{k}=\log\left(1+\mathrm{SINR}_{k}\right)$, where the downlink \gls{SINR} is computed as $\mathrm{SINR}_{k}\triangleq {\abs{\sqrt{\rho\kappa_{k}} \, \mathbb{E}[\hat{\mathbf{h}}_{\mathrm{dl}, k}^{\rm T}\hat{\mathbf{h}}_{\mathrm{dl}, k}^{*}]}^2}/{\mathbb{E}[\abs{z_{k}}^2]}$. This, we can further break down as follows:
\begin{align}
\mathrm{SINR}_{k}=\frac{\mathrm{G}_k}{\mathrm{BI}_{k}+\mathrm{CI}_{k}+\mathrm{MI}_{k}+\sigma_{\rm UE}^2},\label{eq: SINR_k}
\end{align}
where the terms are defined as
\begin{subequations}
\begin{align}
&\mathrm{G}_k{\triangleq\rho\kappa_{k}\left|\mathbb{E}\left[\hat{\mathbf{h}}_{\mathrm{dl}, k}^{\rm T}\hat{\mathbf{h}}_{\mathrm{dl}, k}^{*}\right]\right|^2},\\
&\mathrm{BI}_{k}{\triangleq \rho\kappa_{k}\mathrm{Var}\left(\hat{\mathbf{h}}_{\mathrm{dl}, k}^{\rm T}\hat{\mathbf{h}}_{\mathrm{dl}, k}^{*}\right)},\\
&\mathrm{CI}_{k}{\triangleq \rho\kappa_{k}\Elr{\abs{\tilde{\mathbf{h}}_{\mathrm{dl}, k}^{\rm T}\hat{\mathbf{h}}_{\mathrm{dl}, k}^{*}}^2}},\\
&\mathrm{MI}_{k}{\triangleq\sum\nolimits_{n \neq k}{\rho\kappa_{n}}\Elr{\abs{\mathbf{h}_{\mathrm{dl}, k}^{\rm T}\hat{\mathbf{h}}_{\mathrm{dl}, n}^{*}}^2}};
\end{align}
\end{subequations}
and they are respectively the array gain, the variance of the beamforming uncertainty, self-interference due to calibration imperfections, and multi-\gls{UE} interference, all at the $k$-th \gls{UE}. 
{We note that the \gls{SINR} in~\eqref{eq: SINR_k} can be evaluated for any channel model.}
However, to obtain an intuitive understanding of the effects of imperfect calibration on the downlink \gls{SINR}, a mathematically tractable closed-form expression of~\eqref{eq: SINR_k} is required. This further depends on the underlying distributions of the channels. Due to the multiplicative nature of the calibration coefficients and the composite channels, we consider the following simplifications.
\par
First of all, we assume perfect \gls{CSI} for the repeater to the \gls{BS}~(i.e., $\mathbf{g}$).\footnote{\edit{This assumption, although optimistic, is fairly reasonable in practical scenarios  as typically the \ac{BS} and the repeater would be deployed  at elevated locations (e.g., rooftops) and within \ac{LoS} of one another. 
}}
Then, to capture the effects of uplink channel estimation errors, we consider $\hat{\bar{\mathbf{h}}}_{\mathrm{ul}, k}\sim\mathcal{CN}(\mathbf{0}_{N}, \mathbf{K}_{\bar{\mathbf{h}}_k})$ and $\tilde{\bar{\mathbf{h}}}_{\mathrm{ul}, k}\sim\mathcal{CN}(\mathbf{0}_{N}, \boldsymbol{\Sigma}_{\bar{\mathbf{h}}_k}-\mathbf{K}_{\bar{\mathbf{h}}_k})$, where $\bar{\mathbf{h}}_{\mathrm{ul}, k}\sim\mathcal{CN}(\mathbf{0}_{N}, \boldsymbol{\Sigma}_{\bar{\mathbf{h}}_k})$ and $\boldsymbol{\Sigma}_{\bar{\mathbf{h}}_{ k}} = \beta_{h_k}|t_{\mathrm{U},k}|^2\mathbf{R}_{\rm B}\mathbf{R}_{\rm B}^{\rm H}$.
Similarly, if we assume
$\bar{\mathbf{g}}_{\mathrm{ul},k}\sim\mathcal{CN}(\mathbf{0}_{N}, \boldsymbol{\Sigma}_{\bar{\mathbf{g}}_k})$, we can write  $\hat{\bar{\mathbf{g}}}_{\mathrm{ul},k}\sim\mathcal{CN}(\mathbf{0}_{N}, \mathbf{K}_{\bar{\mathbf{g}}_k})$ and $\tilde{\bar{\mathbf{g}}}_{\mathrm{ul},k}\sim\mathcal{CN}(\mathbf{0}_{N}, \boldsymbol{\Sigma}_{\bar{\mathbf{g}}_k}-\mathbf{K}_{\bar{\mathbf{g}}_k})$, where $\boldsymbol{\Sigma}_{\bar{\mathbf{g}}_{ k}} = \beta_{f_k}|\alpha_u|^2|t_{\mathrm{U},k}|^2\mathbf{R}_{\rm B}\mathbf{g}\mathbf{g}^{\rm H}\mathbf{R}_{\rm B}^{\rm H}$. 
Here, assuming the orthogonal pilot and \ac{MMSE} estimator for the channel estimation, we can define the covariance matrices corresponding to the above variables as follows:
\begin{subequations}
    \begin{align}
        \mathbf{K}_{\bar{\mathbf{h}}_k} &= \rho_{\rm UL}\boldsymbol{\Sigma}_{\bar{\mathbf{h}}_{ k}}\left(\rho_{\rm UL}\boldsymbol{\Sigma}_{\bar{\mathbf{h}}_{ k}} + \sigma_{\rm BS}^2\mathbf{I}_N\right)^{-1}\boldsymbol{\Sigma}_{\bar{\mathbf{h}}_{ k}}\\
        \mathbf{K}_{\bar{\mathbf{g}}_k} &= \rho_{\rm UL}\boldsymbol{\Sigma}_{\bar{\mathbf{g}}_{ k}}\left(\rho_{\rm UL}\boldsymbol{\Sigma}_{\bar{\mathbf{g}}_{ k}} + \sigma_{\rm BS}^2\mathbf{I}_N\right)^{-1}\boldsymbol{\Sigma}_{\bar{\mathbf{g}}_{ k}},
    \end{align}    
\end{subequations}
where $\rho_{\rm UL}$ denotes the uplink transmit power.
Further, we consider that orthogonality between the estimated channels and the error vectors~(which holds for \gls{MMSE} estimators) and the independence between the estimated direct-link channel and the channel via the repeater.
Then $\hat{\mathbf{h}}_{\mathrm{ul}, k}$ is zero-mean \gls{CSCG} random vector with covariance being $\boldsymbol{\Sigma}_{\hat{\mathbf{h}}_{\mathrm{ul}, k}}=\mathbf{K}_{\bar{\mathbf{h}}_k}+\abs{\hat{e}_{\mathrm{R}}}^2\mathbf{K}_{\bar{\mathbf{g}}_k}$.
However, we acknowledge that $\tilde{\mathbf{h}}_{\mathrm{ul}, k}$ is not Gaussian, since it contains products of random variables such as $\tilde{e}_{\mathrm{U}, k}\tilde{\mathbf{E}}_{\rm B}\tilde{\mathbf{h}}_{\mathrm{ul}, k}$. The computation joint distribution of these variables is not mathematically amenable and beyond the scope of this paper. Thus, we approximate $\tilde{\mathbf{h}}_{\mathrm{ul}, k}$ by a \gls{CSCG} random vector
with the covariance being $\boldsymbol{\Sigma}_{\tilde{\mathbf{h}}_{\mathrm{ul}, k}}\triangleq\left(\boldsymbol{\Sigma}_{\bar{\mathbf{h}}_k}\!-\!\mathbf{K}_{\bar{\mathbf{h}}_k}\right)\!+\!\abs{\hat{e}_{\mathrm{R}}}^2\left(\boldsymbol{\Sigma}_{\bar{\mathbf{g}}_k}-\mathbf{K}_{\bar{\mathbf{g}}_k}\right)+\sigma^2_{\mathrm{R}}\boldsymbol{\Sigma}_{\bar{\mathbf{g}}_k}$.
With these, we can now compute the \gls{SINR} in closed form.
\begin{thm}\label{thm: SINR_k}
The downlink \gls{SINR} of the $k$-th \gls{UE} with imperfectly calibrated reciprocity coefficients can be written as
\begin{align}
\label{eq:SINR_error}
\frac{\kappa_{k}\abs{\hat{e}_{\mathrm{U}, k}}^4\abslr{\mathrm{Tr}\left\{\boldsymbol{\Sigma}_{\hat{\mathbf{h}}_{\mathrm{ul}, k}}\abs{\hat{\mathbf{E}}_{\mathrm{B}}}^{2}\right\}}^2}{\sum\limits_{n=1}^{K}\delta_{kn}\mu_{kn}+\sigma_{\mathrm{B}}^2\left(\delta_{kk}\dot{\xi}_{kk}+\sum\limits_{\substack{n\neq k}}\delta_{kn}\xi_{kn}\right)+\dot{\delta}_{kk}\nu_{kk}+\frac{\sigma_{\rm UE}^2}{\rho}},
\end{align}
where the terms are defined as follows:
\begin{subequations}\label{eqn:full}
\begin{align}
&\delta_{kn}=\kappa_{n}\abslr{\hat{e}_{\mathrm{U}, n}}^2\left(\abslr{\hat{e}_{\mathrm{U}, k}}^2+\sigma_{\mathrm{U}, k}^2\right),\label{eq:SINR_subeqs1}\\
&\dot{\delta}_{kk}=\kappa_{k}{\abslr{\hat{e}_{\mathrm{U}, k}}^2\sigma_{\mathrm{U}, k}^2},\label{eq:SINR_subeqs2}\\
&\mu_{kn}=\mathrm{Tr}\left\{\left(\boldsymbol{\Sigma}_{\hat{\mathbf{h}}_{\mathrm{ul}, k}}+\boldsymbol{\Sigma}_{\tilde{\mathbf{h}}_{\mathrm{ul}, k}}\right)\abs{\hat{\mathbf{E}}_{\mathrm{B}}}^{2}\boldsymbol{\Sigma}_{\hat{\mathbf{h}}_{\mathrm{ul}, n}}\abs{\hat{\mathbf{E}}_{\mathrm{B}}}^{2}\right\},\label{eq:SINR_subeqs3}\\
&\xi_{kn}=\mathrm{Tr}\left\{\left(\hat{\mathbf{E}}_{\mathrm{B}}^{*}\boldsymbol{\Sigma}_{\hat{\mathbf{h}}_{\mathrm{ul}, n}}\hat{\mathbf{E}}_{\mathrm{B}}^{\rm T}\right)\odot\left(\boldsymbol{\Sigma}_{\hat{\mathbf{h}}_{\mathrm{ul}, k}}+\boldsymbol{\Sigma}_{\tilde{\mathbf{h}}_{\mathrm{ul}, k}}\right)^{\mathrm{T}}\right\},\label{eq:SINR_subeqs4}\\
&\dot{\xi}_{kk}=\mathrm{Tr}\left\{\hat{\mathbf{E}}_{\mathrm{B}}^{*}\boldsymbol{\Sigma}_{\hat{\mathbf{h}}_{\mathrm{ul}, k}}\odot \left(\hat{\mathbf{E}}_{\mathrm{B}}^{*}\boldsymbol{\Sigma}_{\hat{\mathbf{h}}_{\mathrm{ul}, k}}\right)^{\rm H}\right\}+\xi_{kk},\label{eq:SINR_subeqs5}\\
&\nu_{kk}=\abslr{\mathrm{Tr}\left\{\boldsymbol{\Sigma}_{\hat{\mathbf{h}}_{\mathrm{ul}, k}}\abs{\hat{\mathbf{E}}_{\mathrm{B}}}^{2}\right\}}^2.\label{eq:SINR_subeqs6}
\end{align}
\end{subequations}
\end{thm}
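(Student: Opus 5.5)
Starting from the decomposition in~\eqref{eq: SINR_k}, which is valid by~\cref{prop_uncorrelated}, I would evaluate $\mathrm{G}_k$, $\mathrm{BI}_k$, $\mathrm{CI}_k$ and $\mathrm{MI}_k$ separately under the Gaussian model for $\hat{\mathbf{h}}_{\mathrm{ul},k}$ and the \gls{CSCG} approximation for $\tilde{\mathbf{h}}_{\mathrm{ul},k}$. I would then divide numerator and denominator by $\rho\kappa_k$ (so that $\sigma_{\rm UE}^2$ becomes $\sigma_{\rm UE}^2/\rho$).

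\textbf{Array gain.} Using $\hat{\mathbf{h}}_{\mathrm{dl},k}=\hat{e}_{\mathrm{U},k}\hat{\mathbf{E}}_{\mathrm{B}}\hat{\mathbf{h}}_{\mathrm{ul},k}$, we have $\mathbb{E}[\hat{\mathbf{h}}_{\mathrm{dl},k}^{\rm T}\hat{\mathbf{h}}_{\mathrm{dl},k}^{*}]=\abs{\hat{e}_{\mathrm{U},k}}^2\mathrm{Tr}\{\boldsymbol{\Sigma}_{\hat{\mathbf{h}}_{\mathrm{ul},k}}\abs{\hat{\mathbf{E}}_{\mathrm{B}}}^2\}$. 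Squaring gives the numerator.

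\textbf{Uncertainty and self-interference.} For the remaining terms, I would rather not compute $\mathrm{BI}_k$ and $\mathrm{CI}_k$ separately. Instead I would note that
\[
\mathrm{BI}_k+\mathrm{CI}_k=\rho\kappa_k\left(\mathbb{E}\bigl[\abs{\mathbf{h}_{\mathrm{dl},k}^{\rm T}\hat{\mathbf{h}}_{\mathrm{dl},k}^{*}}^2\bigr]-\abs{\mathbb{E}[\cdot]}^2\right),
\]
which holds by the uncorrelatedness in~\cref{prop_uncorrelated}. Together with $\mathrm{MI}_k$, the task then reduces to one generic quantity, $\mathbb{E}[\abs{\mathbf{h}_{\mathrm{dl},k}^{\rm T}\hat{\mathbf{h}}_{\mathrm{dl},n}^{*}}^2]$, for $n=k$ and $n\neq k$. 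I would write
\[
\mathbf{h}_{\mathrm{dl},k}=e_{\mathrm{U},k}(\hat{\mathbf{E}}_{\mathrm{B}}+\tilde{\mathbf{E}}_{\mathrm{B}})\mathbf{x}_k,
\]
where $\mathbf{x}_k$ is $\hat{\mathbf{h}}_{\mathrm{ul},k}$ plus the approximated Gaussian error, so that $\mathbf{x}_k$ has covariance $\boldsymbol{\Sigma}_{\hat{\mathbf{h}}_{\mathrm{ul},k}}+\boldsymbol{\Sigma}_{\tilde{\mathbf{h}}_{\mathrm{ul},k}}$. I would then take expectations in stages, using independence of the calibration errors from the channels:
\begin{itemize}
\item First over $\tilde{e}_{\mathrm{U},k}$. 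This yields $\abs{e_{\mathrm{U},k}}^2\to\abs{\hat{e}_{\mathrm{U},k}}^2+\sigma_{\mathrm{U},k}^2$, and combined with $\abs{\hat{e}_{\mathrm{U},n}}^2$ from $\hat{\mathbf{h}}_{\mathrm{dl},n}$ this produces $\delta_{kn}$.
\item Next over $\tilde{\mathbf{E}}_{\mathrm{B}}$. Its entries are zero-mean with variance $\sigma_{\mathrm{B}}^2$, independent across antennas. The cross terms vanish, leaving the $\hat{\mathbf{E}}_{\mathrm{B}}$ part plus a $\sigma_{\mathrm{B}}^2$ part. The latter is a diagonal quadratic form $\sum_i \abs{x_{k,i}}^2\abs{\hat e_{\mathrm{B},i}}^2\abs{\hat h_{\mathrm{ul},n,i}}^2$, whose expectation gives the Hadamard-product traces $\xi_{kn}$.
\end{itemize}

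\textbf{Gaussian fourth moments.} The final stage is the expectation over the Gaussian vectors.
\begin{itemize}
\item For $n\neq k$, $\mathbf{x}_k$ and $\hat{\mathbf{h}}_{\mathrm{ul},n}$ are independent, so $\mathbb{E}\abs{\mathbf{x}_k^{\rm T}\mathbf{A}\hat{\mathbf{h}}^*_{\mathrm{ul},n}}^2=\mathrm{Tr}\{\cdot\}$ gives $\mu_{kn}$ directly.
\item For $n=k$, $\mathbf{x}_k$ and $\hat{\mathbf{h}}_{\mathrm{ul},k}$ are correlated through the common $\hat{\mathbf{h}}_{\mathrm{ul},k}$ component. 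Here I would use the \gls{CSCG} fourth-moment identity $\mathbb{E}\abs{\mathbf{u}^{\rm H}\mathbf{A}\mathbf{u}}^2=\abs{\mathrm{Tr}\{\mathbf{A}\boldsymbol{\Sigma}\}}^2+\mathrm{Tr}\{\mathbf{A}\boldsymbol{\Sigma}\mathbf{A}^{\rm H}\boldsymbol{\Sigma}\}$.
\end{itemize}
In the $n=k$ case, the squared-mean piece has coefficient $\abs{\hat{e}_{\mathrm{U},k}}^2(\abs{\hat{e}_{\mathrm{U},k}}^2+\sigma^2_{\mathrm{U},k})$. Subtracting $\abs{\mathbb{E}}^2$, whose coefficient is $\abs{\hat{e}_{\mathrm{U},k}}^4$, leaves exactly $\dot{\delta}_{kk}\nu_{kk}$. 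The trace piece joins $\delta_{kk}\mu_{kk}$. Likewise, the diagonal fourth moment $\mathbb{E}\abs{\hat h_i}^4=2(\cdot)^2$ in the $\sigma_{\mathrm{B}}^2$ term creates the extra Hadamard term in $\dot{\xi}_{kk}$.

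\textbf{Main obstacle.} I expect the hard step to be this $n=k$ bookkeeping. One must separate the parts of $\mathbf{x}_k$ that are correlated and uncorrelated with $\hat{\mathbf{h}}_{\mathrm{ul},k}$, and track precisely which squared-mean contributions cancel against $\mathrm{G}_k$. I would first do the computation entry-wise with diagonal $\hat{\mathbf{E}}_{\mathrm{B}}$ and then repackage the sums as the stated traces.
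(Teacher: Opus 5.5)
Your proposal is correct, but it organizes the computation differently from the paper.

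\textbf{Comparison with the paper.} The paper splits $\mathbf{h}_{\mathrm{dl},k}=\hat{\mathbf{h}}_{\mathrm{dl},k}+\tilde{\mathbf{h}}_{\mathrm{dl},k}$ additively and evaluates $\mathrm{BI}_k$, $\mathrm{CI}_k$ and $\mathrm{MI}_k$ one by one. It expands $\mathrm{CI}_k$ into four pieces, one per summand of $\tilde{\mathbf{h}}_{\mathrm{dl},k}$ in \eqref{eq: tilde_dl_effective}. It treats $\mathrm{MI}_k$ with the same estimate-plus-error split, and the error part is ``omitted for brevity''.

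You instead keep the multiplicative form $\mathbf{h}_{\mathrm{dl},k}=e_{\mathrm{U},k}(\hat{\mathbf{E}}_{\mathrm{B}}+\tilde{\mathbf{E}}_{\mathrm{B}})\mathbf{x}_k$. You merge $\mathrm{BI}_k+\mathrm{CI}_k$ into a single centred second moment. You then average in stages: over $\tilde{e}_{\mathrm{U},k}$, then $\tilde{\mathbf{E}}_{\mathrm{B}}$, then the Gaussian vectors.

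The tools are the same (Results~\ref{res: 1} and~\ref{res: 2}), and your stages reproduce the stated denominator. Every $n$ contributes $\delta_{kn}(\mu_{kn}+\sigma_{\mathrm{B}}^2\xi_{kn})$. For $n=k$ there are only two extra pieces:
\begin{itemize}
\item $\kappa_k(p_{\mathrm{U},k}-|\hat{e}_{\mathrm{U},k}|^4)\nu_{kk}=\dot{\delta}_{kk}\nu_{kk}$;
\item the surplus from $\mathbb{E}[|\hat{h}_{\mathrm{ul},k,i}|^4]=2[\boldsymbol{\Sigma}_{\hat{\mathbf{h}}_{\mathrm{ul},k}}]_{ii}^2$, which is the first trace in $\dot{\xi}_{kk}$.
\end{itemize}

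Your route is shorter. It also explains the shape of the denominator: a common $\delta_{kn}(\mu_{kn}+\sigma_{\mathrm{B}}^2\xi_{kn})$ for all users, with self-correlation corrections only at $n=k$. It also fills in the step the paper omits. The paper's route, in exchange, keeps the interpretable split into beamforming uncertainty, calibration self-interference and multi-user interference, on which Observation~\ref{obs:imperfection} is built.

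\textbf{Two minor corrections.} First, your identity for $\mathrm{BI}_k+\mathrm{CI}_k$ needs
$\mathbb{E}\bigl[(\hat{\mathbf{h}}_{\mathrm{dl},k}^{\rm T}\hat{\mathbf{h}}_{\mathrm{dl},k}^{*})^{*}\,\tilde{\mathbf{h}}_{\mathrm{dl},k}^{\rm T}\hat{\mathbf{h}}_{\mathrm{dl},k}^{*}\bigr]=0$. This higher-order condition does not follow from mere uncorrelatedness of $\hat{\mathbf{h}}_{\mathrm{dl},k}$ and $\tilde{\mathbf{h}}_{\mathrm{dl},k}$, so Proposition~\ref{prop_uncorrelated} alone is not the right justification. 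It does hold, because each summand of $\tilde{\mathbf{h}}_{\mathrm{dl},k}$ carries a zero-mean factor ($\tilde{e}_{\mathrm{U},k}$, $\tilde{\mathbf{E}}_{\mathrm{B}}$, or $\tilde{\mathbf{h}}_{\mathrm{ul},k}$) that is independent of the remaining factors. The paper's own split of $\mathbb{E}[|z_k|^2]$ into $\mathrm{BI}_k+\mathrm{CI}_k+\mathrm{MI}_k+\sigma_{\rm UE}^2$ relies on exactly this fact.

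Second, to obtain the displayed form, divide by $\rho$ rather than by $\rho\kappa_k$. The theorem keeps $\kappa_k$ in the numerator and $\kappa_n$ inside $\delta_{kn}$ and $\dot{\delta}_{kk}$, with noise term $\sigma_{\rm UE}^2/\rho$.
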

\begin{proof}
See Appendix~\ref{app: SINR_k}.
\end{proof}
\begin{obs}
\label{obs:imperfection}
The part of interference in the downlink \gls{SINR} which arises only due to the effects of calibration imperfections can be shown to be equal to $\{\sum_{n=1}^{K}\kappa_{n}\abslr{\hat{e}_{\mathrm{U}, n}}^2\sigma_{\mathrm{U}, k}^2\mu_{kn}+\sigma_{\mathrm{B}}^2(\delta_{kk}\dot{\xi}_{kk}+\sum\nolimits_{\substack{n\neq k}}\delta_{kn}\xi_{kn})+\sigma_{\mathrm{U}, k}^2\kappa_{k}\abslr{\hat{e}_{\mathrm{U}, k}}^2\nu_{kk}\}$, where the constituent terms explicitly depend on the error variances, $\{\sigma^2_{\mathrm{U}, k},\sigma^2_{\mathrm{B}}\}$, in measuring the calibration coefficients at the \gls{BS} and the \acp{UE}.
\edit{This implies that the calibration errors at the \ac{BS} and \acp{UE} are physically embedded in the transceiver RF chains and hence affect all effective channels.
In contrast, the repeater calibration error variance $\sigma^2_{\mathrm{R}}$ affects only the covariance $\boldsymbol{\Sigma}_{\tilde{\mathbf{h}}_{\mathrm{ul}, k}}$ in $\xi_{kn}$, since it impacts only the channel components that traverse the repeater path.}
\end{obs}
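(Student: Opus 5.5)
The plan is to read Observation~\ref{obs:imperfection} directly off the closed form of \cref{thm: SINR_k}, which I take as given. I would split the denominator of \eqref{eq:SINR_error} into a ``baseline'' part and a remainder. The baseline is what survives when the BS and UE calibration coefficients are perfectly known, i.e.\ $\sigma^2_{\mathrm{U},k}=0$ for all $k$ and $\sigma^2_{\mathrm{B}}=0$. The remainder is then, by definition, the interference caused only by calibration imperfections.

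The first step is to write $\delta_{kn}$ in \eqref{eq:SINR_subeqs1} as
\[
\delta_{kn}=\kappa_n\abslr{\hat{e}_{\mathrm{U},n}}^2\abslr{\hat{e}_{\mathrm{U},k}}^2+\kappa_n\abslr{\hat{e}_{\mathrm{U},n}}^2\sigma^2_{\mathrm{U},k}.
\]
This splits $\sum_{n}\delta_{kn}\mu_{kn}$ into an error-free part $\sum_n\kappa_n\abslr{\hat{e}_{\mathrm{U},n}}^2\abslr{\hat{e}_{\mathrm{U},k}}^2\mu_{kn}$ and the term $\sum_n\kappa_n\abslr{\hat{e}_{\mathrm{U},n}}^2\sigma^2_{\mathrm{U},k}\mu_{kn}$.

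The second step is to note that the block $\sigma^2_{\mathrm{B}}\bigl(\delta_{kk}\dot{\xi}_{kk}+\sum_{n\neq k}\delta_{kn}\xi_{kn}\bigr)$ carries the explicit factor $\sigma_{\mathrm{B}}^2$. It therefore vanishes entirely when $\sigma_{\mathrm{B}}^2=0$ and belongs wholly to the remainder. The third step uses \eqref{eq:SINR_subeqs2}, which gives $\dot{\delta}_{kk}\nu_{kk}=\sigma^2_{\mathrm{U},k}\kappa_k\abslr{\hat{e}_{\mathrm{U},k}}^2\nu_{kk}$; this also lies in the remainder. Adding these three pieces gives exactly the expression in the observation. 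Each piece visibly carries a factor $\sigma^2_{\mathrm{U},k}$ or $\sigma^2_{\mathrm{B}}$, which is the claimed explicit dependence.

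For the physical interpretation, I would point to \eqref{eq: tilde_dl_effective}. There $\tilde{e}_{\mathrm{U},k}$ and $\tilde{\mathbf{E}}_{\mathrm{B}}$ multiply the whole $\hat{\mathbf{h}}_{\mathrm{ul},k}$ and $\tilde{\mathbf{h}}_{\mathrm{ul},k}$, so they act on both the direct and the repeated paths. By contrast, $\tilde{e}_{\mathrm{R}}$ enters only through $\tilde{e}_{\mathrm{R}}\hat{\bar{\mathbf{g}}}_{\mathrm{ul},k}$ inside $\tilde{\mathbf{h}}_{\mathrm{ul},k}$.

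For the repeater claim, I would check which quantities depend on $\sigma^2_{\mathrm{R}}$. $\boldsymbol{\Sigma}_{\hat{\mathbf{h}}_{\mathrm{ul},k}}$ depends only on $\hat{e}_{\mathrm{R}}$, and $\kappa_n$ and $\nu_{kk}$ depend only on $\boldsymbol{\Sigma}_{\hat{\mathbf{h}}_{\mathrm{ul},n}}$. The variance $\sigma^2_{\mathrm{R}}$ enters only through the additive term $\sigma^2_{\mathrm{R}}\boldsymbol{\Sigma}_{\bar{\mathbf{g}}_k}$ in the Gaussian-approximated covariance $\boldsymbol{\Sigma}_{\tilde{\mathbf{h}}_{\mathrm{ul},k}}$. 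That term projects, via $\boldsymbol{\Sigma}_{\bar{\mathbf{g}}_k}\propto\mathbf{R}_{\mathrm{B}}\mathbf{g}\mathbf{g}^{\mathrm{H}}\mathbf{R}_{\mathrm{B}}^{\mathrm{H}}$, only onto the repeater direction.

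The main obstacle is bookkeeping rather than analysis. $\boldsymbol{\Sigma}_{\tilde{\mathbf{h}}_{\mathrm{ul},k}}$ appears in both $\mu_{kn}$ and $\xi_{kn}$, so $\sigma^2_{\mathrm{R}}$ is not literally absent from $\mu_{kn}$. I would handle this by arguing as follows. Inside the calibration-only interference isolated above, the $\sigma^2_{\mathrm{R}}$ contribution coupled to the BS calibration error arrives through $\xi_{kn}$ (and $\dot{\xi}_{kk}$). Its appearance in $\mu_{kn}$ acts like an additional estimation-error covariance, as already present in the error-free baseline. This makes precise the sense in which $\sigma^2_{\mathrm{R}}$ ``affects only $\boldsymbol{\Sigma}_{\tilde{\mathbf{h}}_{\mathrm{ul},k}}$.'' If this split into error-free and calibration-only parts cannot be made clean, I would fall back to stating the weaker but exact claim: $\sigma^2_{\mathrm{R}}$ enters the SINR solely through $\boldsymbol{\Sigma}_{\tilde{\mathbf{h}}_{\mathrm{ul},k}}$.
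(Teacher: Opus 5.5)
Your proposal is correct and follows the paper's route; the paper gives no separate argument and reads the observation directly off \cref{thm: SINR_k}. Splitting $\delta_{kn}=\kappa_n\abslr{\hat{e}_{\mathrm{U},n}}^2\abslr{\hat{e}_{\mathrm{U},k}}^2+\kappa_n\abslr{\hat{e}_{\mathrm{U},n}}^2\sigma^2_{\mathrm{U},k}$ and collecting the $\sigma^2_{\mathrm{B}}$ block and $\dot{\delta}_{kk}\nu_{kk}$ gives exactly the stated expression. On the repeater sentence, your attempted ``clean split'' does not go through: $\sigma^2_{\mathrm{R}}$ also enters the calibration-only part through $\boldsymbol{\Sigma}_{\tilde{\mathbf{h}}_{\mathrm{ul},k}}$ inside the $\sigma^2_{\mathrm{U},k}$-weighted $\mu_{kn}$ term, not only through $\xi_{kn}$. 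Your fallback is therefore the exact content: $\sigma^2_{\mathrm{R}}$ enters the SINR only through $\boldsymbol{\Sigma}_{\tilde{\mathbf{h}}_{\mathrm{ul},k}}$, via a rank-one term along $\mathbf{R}_{\rm B}\mathbf{g}$, and that is all the paper later relies on in its numerical discussion.
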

\par
\subsubsection{Special case: No calibration error at the \gls{BS} and \acp{UE}} For this, $e_{\mathrm{U}, k}=\hat{e}_{\mathrm{U}, k}=1, \forall k,$ and $\mathbf{E}_{\mathrm{B}}=\hat{\mathbf{E}}_{\mathrm{B}}=\vI_{N}$. This case subsumes the results presented in~\cite{Yiming_PIMRC}.
\par
While for ease of exposure, herein we consider a single repeater, the preceding analyses can be generalized to multiple repeaters. Conceptually, this extension is straightforward, but it entails substantial additional bookkeeping.
 For example, under independence assumptions across the repeater-related channels, the covariance expressions follow in a straightforward manner.
\section{Numerical Results and Discussions}
In this section, we provide the numerical results of the \gls{SE} analyzed based on the proposed framework.
We consider a circular cell with a radius of 650 [m], where the \gls{BS}, equipped with $N=64$ antennas, is located at the center.
The repeater is located 200 [m] away from the \gls{BS} along the $0^\circ$ direction, and $K=10$ \glspl{UE} are randomly distributed within the cell.
\par
We employ the same channel model as in \cite{2025_Jiannan_repeater} while assuming that a \ac{LoS} \gls{BS}–repeater link always exists by carefully deploying the repeater.
Carrier frequency, bandwidth, and noise density are set to $f_{\rm c} = 6~[\rm GHz]$, $B=20~[\rm MHz]$, and $-174~[\rm dBm/Hz]$, respectively~\cite{2025_Jiannan_repeater}.
The transmit powers of the \gls{BS}, \gls{UE} (for uplink channel estimation), and repeater are set to $\rho=33~[\rm dBm]$, $\rho_{\rm UL} = 23~[\rm dBm]$, and $\rho_{\rm R} = 23~[\rm dBm]$, respectively.
The repeater amplification gain is given by $\alpha = \min(\sqrt{\rho_{\rm R}/(\rho\beta_{\rm g})}, A_{\rm max})$, where $\beta_{\rm g}$ and $A_{\rm max}=90~[\mathrm{dB}]$ denote path-loss of BS-repeater channel and maximum amplification gain, respectively.
In the simulation, following \cite{2018_Emil_reciprocityError}, we model the calibration error as $e_{Z} = 1 + \mathcal{CN}(0, \sigma_{Z}^2), Z\in\{\mathrm{B,U,R}\}$, {and $100$ such random calibration error realizations are considered for \gls{CDF} computations}.
Then, we define the ``ideal calibration'' case as $\sigma_{\rm R}^2=\sigma_{\rm B}^2=\sigma_{\rm U}^2=0$.
For simplicity, we further assume $\alpha_{\rm u}=\min\left(\sqrt{\rho_{\rm R}/(\rho\beta_{\rm g})}, A_{\rm max}\right)$, $t_{\mathrm{U},k}=1,\forall k$, and $\mathbf{R}_{\rm B}=\mathbf{I}_N$.
\begin{figure}[t]
    \centering
    \includegraphics[width=\linewidth]{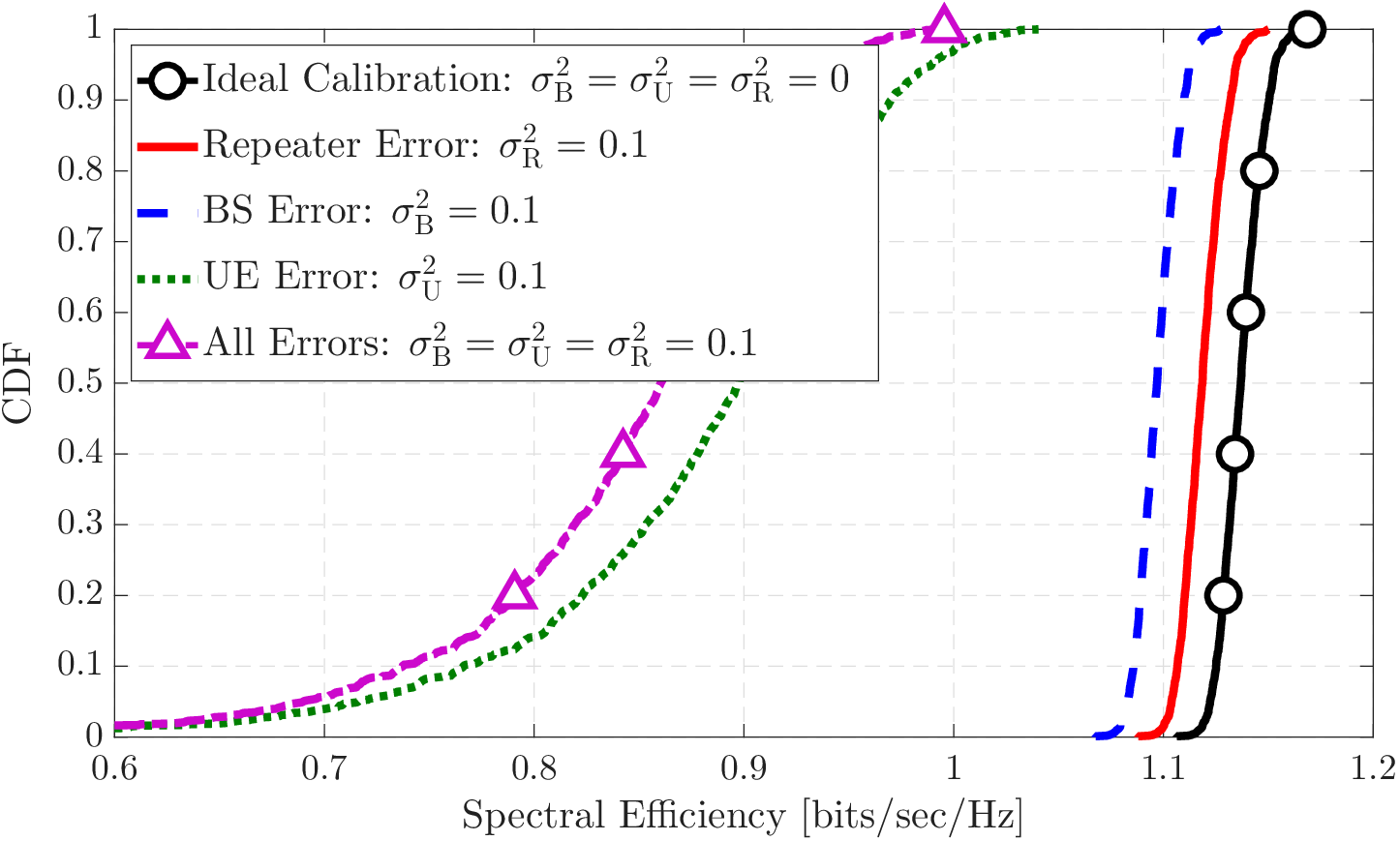}
    \caption{\glspl{CDF} of \gls{SE} with different calibration error variance. In the legend, whenever a variance (e.g., $\sigma_{\rm R}^2=0.1$) is specified, the other variances are set to zero.}
    \label{fig:CDF_SE}
\end{figure}
%
\par
Fig.~\ref{fig:CDF_SE} and Fig.~\ref{fig:SE_variance} illustrate the \gls{CDF} of the \gls{SE} obtained from~\eqref{eq:SINR_error} and the 90th-percentile \ac{SE} performance as a function of the error variance, respectively.
From Fig.~\ref{fig:CDF_SE}, it can be observed that the performance loss caused by the repeater calibration error is smaller than that caused by the other errors.
This is because, as noted in Observation~\ref{obs:imperfection}, the repeater calibration error affects only $\boldsymbol{\Sigma}_{\tilde{\mathbf{h}}_{\mathrm{ul},k}}$ linearly, whereas the \ac{UE} error impacts all interference terms.
When the error variance is small, the performance loss due to the \ac{BS} error is close to that of the repeater error.
However, as shown in both figures, the performance gap between the \ac{BS} error and the repeater error grows with the error variance, even though the losses in both cases increase approximately linearly.
In contrast to $\sigma_{\rm R}^2$, which increases the interference power linearly with respect to $\boldsymbol{\Sigma}_{\tilde{\mathbf{h}}_{\mathrm{ul},k}}$, $\sigma_{\rm B}^2$ increases it linearly with respect to $(\delta_{kk}\dot{\xi}_{kk}+\sum_{n\neq k}\delta_{kn}\xi_{kn})$, which includes $\boldsymbol{\Sigma}_{\tilde{\mathbf{h}}_{\mathrm{ul},k}}$~(see \eqref{eq:SINR_subeqs4}).
As a result, the performance gap between the \ac{BS} error and the repeater error widens as the calibration error variance increases.
\begin{figure}[t]
    \centering
    \includegraphics[width=\linewidth]{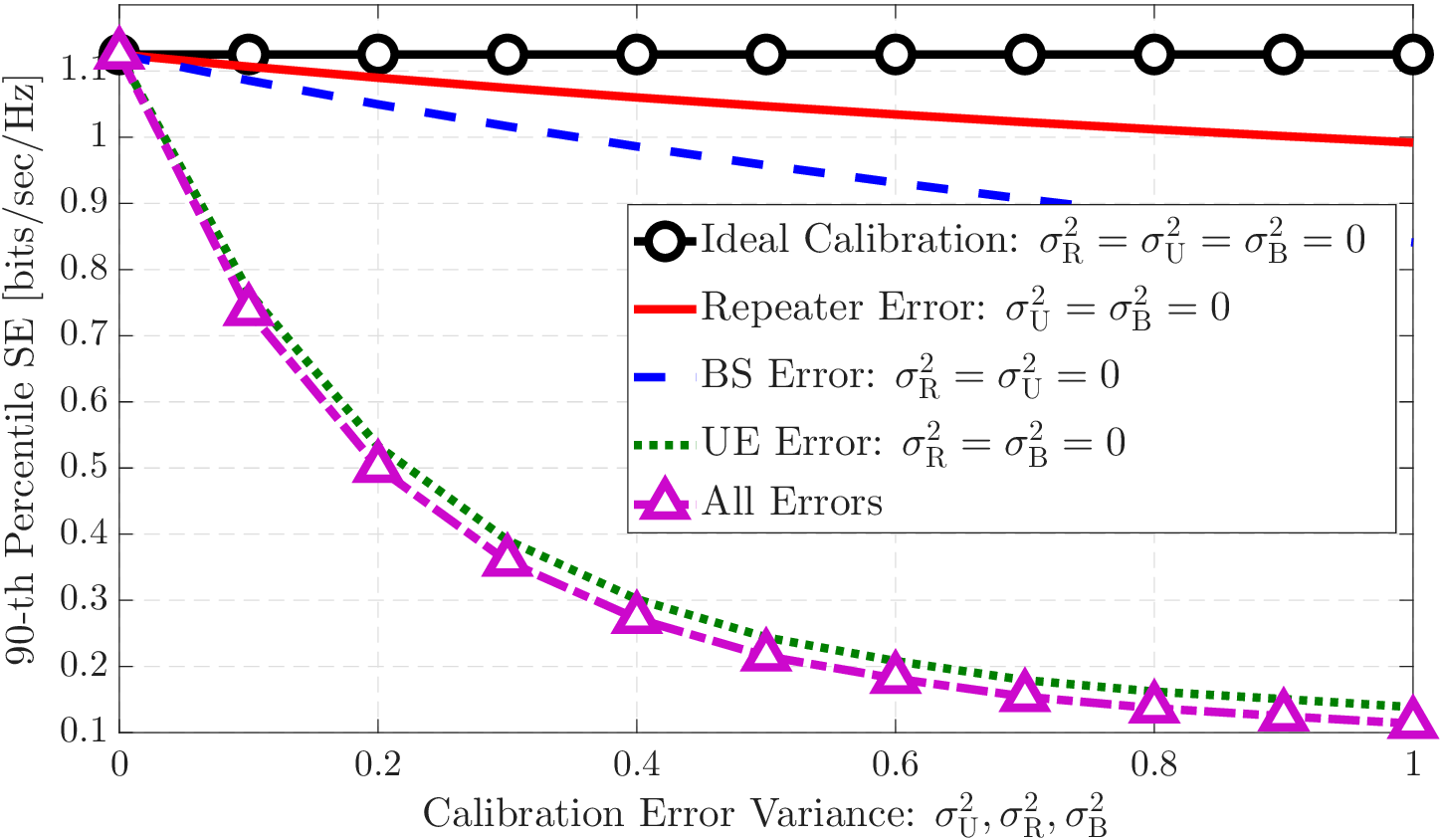}
    \caption{90-th percentile \gls{SE} as a function of calibration error variance.}
    \label{fig:SE_variance}
\end{figure}

\section{Conclusion}
This paper has proposed an \ac{SE} evaluation framework that accounts for imperfect channel estimation and calibration. Compared with previous work that considered only repeater calibration errors \cite{Yiming_PIMRC}, our framework extends the analysis to include BS and UE calibration mismatches.
The numerical results demonstrated that the impact of repeater calibration errors is smaller than that of \ac{BS} and \ac{UE} calibration errors in the considered single-repeater system with Gaussian error modeling. Nonetheless, this indicates that repeater-assisted \ac{MIMO} systems exhibit a certain level of robustness against repeater calibration errors.
Thus, the system can operate with only minimal additional overhead for repeater calibration. However, quantifying the overhead per coherence block, including channel estimation and calibration overhead,  remains to be addressed.

\appendix
\subsection{Derivation of $\mathrm{SINR}_{k}$ for~\cref{thm: SINR_k}.}\label{app: SINR_k}
To compute the array gain, we evaluate the following.
\begin{align}
\mathbb{E}\left[\hat{\mathbf{h}}_{\mathrm{dl}, k}^{\rm T}\hat{\mathbf{h}}_{\mathrm{dl}, k}^{*}\right]
&=\abs{\hat{e}_{\mathrm{U}, k}}^2\mathrm{Tr}\left\{\boldsymbol{\Sigma}_{\hat{\mathbf{h}}_{\mathrm{ul}, k}}\abs{\hat{\mathbf{E}}_{\mathrm{B}}}^{2}\right\},\label{eq: 1}
\end{align}
which follows using~\cref{res: 1}.
Next, we evaluate the variance term involved in $\mathrm{BI}_{k}$.
Using~\cref{res: 1}, we obtain
\begin{align}
&\Elr{\abs{\hat{\mathbf{h}}_{\mathrm{dl}, k}^{\rm T}\hat{\mathbf{h}}_{\mathrm{dl}, k}^{*}}^2}=\abs{\hat{e}_{\mathrm{U}, k}}^4\Elr{\abs{\hat{\mathbf{h}}_{\mathrm{ul}, k}^{\rm T}\abs{\hat{\mathbf{E}}_{\mathrm{B}}}^{2}\hat{\mathbf{h}}_{\mathrm{ul}, k}^{*}}^2}\\
&=\abslr{\hat{e}_{\mathrm{U}, k}}^4\!\left(\abslr{\mathrm{Tr}\left\{\boldsymbol{\Sigma}_{\hat{\mathbf{h}}_{\mathrm{ul}, k}}\abs{\hat{\mathbf{E}}_{\mathrm{B}}}^{2}\right\}}^2\!+\!{\mathrm{Tr}\left\{\left(\boldsymbol{\Sigma}_{\hat{\mathbf{h}}_{\mathrm{ul}, k}}\abs{\hat{\mathbf{E}}_{\mathrm{B}}}^2\right)^{2}\right\}}\right).\notag
\end{align}
Based on this and~\eqref{eq: 1}, we get the final expression for $\mathrm{BI}_{k}$ as
\begin{align}
    \mathrm{BI}_k = \rho\kappa_k|\hat{e}_{\mathrm{U},k}|^4\mathrm{Tr}\left\{ \left( \boldsymbol{\Sigma}_{\hat{\mathbf{h}}_{\mathrm{ul}, k}}|\hat{\mathbf{E}}_{\rm B}|^2 \right)^2 \right\}.
\end{align}
Substituting $\tilde{\mathbf{h}}_{\mathrm{dl}, k}$ from~\eqref{eq: tilde_dl_effective} and due to the mutual uncorrelatedness of the involved terms, the expectation in $\mathrm{CI}_{k}$ becomes
\begin{align}
\Elr{\abs{\tilde{\mathbf{h}}_{\mathrm{dl}, k}^{\rm T}\hat{\mathbf{h}}_{\mathrm{dl}, k}^{*}}^2}&=\Elr{\abslr{\tilde{e}_{\mathrm{U}, k}\hat{e}_{\mathrm{U}, k}^{*}\hat{\mathbf{h}}_{\mathrm{ul}, k}^{\rm T}\hat{\mathbf{E}}_{\mathrm{B}}^{\rm T}\hat{\mathbf{E}}_{\mathrm{B}}^{*}\hat{\mathbf{h}}_{\mathrm{ul}, k}^{\ast}}^2}\notag\\
&+\Elr{\abslr{{e}_{\mathrm{U}, k}\hat{e}_{\mathrm{U}, k}^{*}\hat{\mathbf{h}}_{\mathrm{ul}, k}^{\rm T}\tilde{\mathbf{E}}_{\mathrm{B}}^{\rm T}\hat{\mathbf{E}}_{\mathrm{B}}^{*}\hat{\mathbf{h}}_{\mathrm{ul}, k}^{\ast}}^2}\notag\\
&+\Elr{\abslr{{e}_{\mathrm{U}, k}\hat{e}_{\mathrm{U}, k}^{*}\tilde{\mathbf{h}}_{\mathrm{ul}, k}^{\rm T}\hat{\mathbf{E}}_{\mathrm{B}}^{\rm T}\hat{\mathbf{E}}_{\mathrm{B}}^{*}\hat{\mathbf{h}}_{\mathrm{ul}, k}^{\ast}}^2}\notag\\
&+\Elr{\abslr{{e}_{\mathrm{U}, k}\hat{e}_{\mathrm{U}, k}^{*}\tilde{\mathbf{h}}_{\mathrm{ul}, k}^{\rm T}\tilde{\mathbf{E}}_{\mathrm{B}}^{\rm T}\hat{\mathbf{E}}_{\mathrm{B}}^{*}\hat{\mathbf{h}}_{\mathrm{ul}, k}^{\ast}}^2},\label{eq: 3}
\end{align}
where the cross terms in $\mathbb{E}[\abs{\tilde{\mathbf{h}}_{\mathrm{dl}, k}^{\rm T}\hat{\mathbf{h}}_{\mathrm{dl}, k}^{*}}^2]$ vanish because of the orthogonality of the \gls{MMSE} estimate and the zero-mean property of $\tilde{\mathbf{E}}_{\rm B}$.
Then, using Result~\ref{res: 1}, the first term in~\eqref{eq: 3}, denoted by 
$\mathrm{CI}_{k,1}$, evaluates to 
\begin{align}
\mathrm{CI}_{k,1} \!\!&=\! \abslr{\hat{e}_{\mathrm{U}, k}}^2\sigma_{\mathrm{U}, k}^2(|\mathrm{Tr}\{\boldsymbol{\Sigma}_{\hat{\mathbf{h}}_{\mathrm{ul}, k}}|\hat{\mathbf{E}}_{\mathrm{B}}|^{2}\}|^2\!\!+\!\!{\mathrm{Tr}\{(\boldsymbol{\Sigma}_{\hat{\mathbf{h}}_{\mathrm{ul}, k}}|\hat{\mathbf{E}}_{\mathrm{B}}|^2)^{2}}\}).\notag
\end{align}
Next, the second term, denoted by $\mathrm{CI}_{k,2}$ in~\eqref{eq: 3}, follows 
\begin{align}
\mathrm{CI}_{k,2}
&=p_{\mathrm{U}, k}\sigma^2_{\mathrm{B}}\Bigg[\mathrm{Tr}\left\{\hat{\mathbf{E}}_{\mathrm{B}}^{*}\boldsymbol{\Sigma}_{\hat{\mathbf{h}}_{\mathrm{ul}, k}}\odot \left(\hat{\mathbf{E}}_{\mathrm{B}}^{*}\boldsymbol{\Sigma}_{\hat{\mathbf{h}}_{\mathrm{ul}, k}}\right)^{\rm H}\right\}\notag\\
&\hspace{2ex}+\mathrm{Tr}\left\{\boldsymbol{\Sigma}_{\hat{\mathbf{h}}_{\mathrm{ul}, k}}\odot\left(\hat{\mathbf{E}}_{\mathrm{B}}^{*}\boldsymbol{\Sigma}_{\hat{\mathbf{h}}_{\mathrm{ul}, k}}\hat{\mathbf{E}}_{\mathrm{B}}^{\rm T}\right)^{\mathrm{T}}\right\}\Bigg],
\end{align}
where we define $p_{\mathrm{U}, k}\triangleq\abslr{\hat{e}_{\mathrm{U}, k}}^2(\abslr{\hat{e}_{\mathrm{U}, k}}^2+\sigma_{\mathrm{U}, k}^2)$.
In last step we use~\cref{res: 2}.
Then, we evaluate the third term, denoted by $\mathrm{CI}_{k,3}$ in~\eqref{eq: 3}, as
\begin{align}
\mathrm{CI}_{k,3}
&=p_{\mathrm{U}, k}\mathrm{Tr}\left\{\boldsymbol{\Sigma}_{\tilde{\mathbf{h}}_{\mathrm{ul}, k}}\abs{\hat{\mathbf{E}}_{\mathrm{B}}}^{2}\boldsymbol{\Sigma}_{\hat{\mathbf{h}}_{\mathrm{ul}, k}}\abs{\hat{\mathbf{E}}_{\mathrm{B}}}^{2}\right\}.
\end{align}
The fourth term, denoted by $\mathrm{CI}_{k,4}$, in~\eqref{eq: 3}, is computed as
\begin{align}
\mathrm{CI}_{k,4}
&= p_{\mathrm{U}, k}\sigma^2_{\mathrm{B}}\mathrm{Tr}\left\{\boldsymbol{\Sigma}_{\tilde{\mathbf{h}}_{\mathrm{ul}, k}}\odot\left(\hat{\mathbf{E}}_{\mathrm{B}}^{*}\boldsymbol{\Sigma}_{\hat{\mathbf{h}}_{\mathrm{ul}, k}}\hat{\mathbf{E}}_{\mathrm{B}}^{\rm T}\right)^{\mathrm{T}}\right\}.
\end{align}
Combining $\left\{\mathrm{CI}_{k, i}\right\}_{i=1}^{4}$, we obtain the final expression for the interference due to calibration error at the $k$-th \gls{UE} as $\mathrm{CI}_k = \rho\kappa_k\sum_{i=1}^4\mathrm{CI}_{k,i}.$
This is in part constitutes $\dot{\xi}_{kk}$ in~\eqref{eq:SINR_subeqs5}. 
Finally, we evaluate the multi-\gls{UE} interference. We first write 
\begin{align}
\mathbf{h}_{\mathrm{dl}, k}^{\rm T}\hat{\mathbf{h}}_{\mathrm{dl}, n}^{*}
&=\hat{e}_{\mathrm{U}, k}\hat{e}_{\mathrm{U}, n}^{\ast}\hat{\mathbf{h}}_{\mathrm{ul}, k}^{\rm T}\hat{\mathbf{E}}_{\mathrm{B}}\hat{\mathbf{E}}_{\mathrm{B}}^{*}\hat{\mathbf{h}}_{\mathrm{ul}, n}^{\ast}+\hat{e}_{\mathrm{U}, n}^{*}\tilde{\mathbf{h}}_{\mathrm{dl}, k}^{\rm T}\hat{\mathbf{E}}_{\mathrm{B}}^{*}\hat{\mathbf{h}}_{\mathrm{ul}, n}^{\ast}.\nonumber
\end{align}
Then, we can show that
\begin{align}
\Elr{\abslr{\hat{\mathbf{h}}_{\mathrm{ul}, k}^{\rm T}\hat{\mathbf{E}}_{\mathrm{B}}\hat{\mathbf{E}}_{\mathrm{B}}^{*}\hat{\mathbf{h}}_{\mathrm{ul}, n}}^2}=\mathrm{Tr}\left\{\boldsymbol{\Sigma}_{\hat{\mathbf{h}}_{\mathrm{ul}, k}}\abs{\hat{\mathbf{E}}_{\mathrm{B}}}^{2}\boldsymbol{\Sigma}_{\hat{\mathbf{h}}_{\mathrm{ul}, n}}\abs{\hat{\mathbf{E}}_{\mathrm{B}}}^{2}\right\}.\nonumber
\end{align} 
For the power of the second term, we again substitute $\tilde{\mathbf{h}}_{\mathrm{dl}, k}$ from~\eqref{eq: tilde_dl_effective}. This can be computed similarly to the terms involved in $\mathrm{CI}_{k}$, thus omitted for brevity. 
Combining these terms with some algebra, we obtain the~\ac{SINR} as given in~\cref{thm: SINR_k}.\qed

\subsection{Useful Results}\label{app: results}
\begin{res}\label{res: 1}
For a random vector $\mathbf{a}\sim\mathcal{CN}(\vZ_{N},\mathbf{K})$ and a matrix $\mathbf{Z}$ of compatible dimensions, $\mathbb{E}[\mathbf{a}^{H}\mathbf{Z}\mathbf{a}]=\mathrm{Tr}\left\{\mathbf{K}\mathbf{Z}\right\}$ and $\mathbb{E}[|\mathbf{a}^{H}\mathbf{Z}\mathbf{a}|^2]=|\mathrm{Tr}\left\{\mathbf{K}\mathbf{Z}\right\}|^2+\mathrm{Tr}\left\{\mathbf{K}\mathbf{Z}\mathbf{K}\mathbf{Z}^{\rm H}\right\}$.
\end{res}
\begin{res}\label{res: 2}
For a diagonal matrix $\mathbf{C}$ whose diagonal elements are \gls{iid} random variables with zero mean and variance $\sigma^2_{\mathbf{C}}$; and for any two deterministic matrices $\mathbf{Z}_{1}$ and $\mathbf{Z}_{2}$ of compatible dimensions, it follows that $\mathbb{E}[|\mathrm{Tr}\left\{\mathbf{C}\mathbf{Z}_{1}\right\}|^2]=\sigma^2_{\mathbf{C}}\mathrm{Tr}\left\{\mathbf{Z}_{1}\odot \mathbf{Z}_{1}^{\rm H}\right\}$ and $\Elr{\mathrm{Tr}\left\{\mathbf{C}^{\rm H}\mathbf{Z}_{1}\mathbf{C}\mathbf{Z}_{2}\right\}}=\sigma^2_{\mathbf{C}}\mathrm{Tr}\left\{\mathbf{Z}_{1}\odot \mathbf{Z}_{2}^{\rm T}\right\}$.
\end{res}
These results can be obtained using Isserlis' theorem ~\cite{1918_Isserlis_theorem}.
\ifCLASSOPTIONcaptionsoff
\newpage
\fi
\bibliographystyle{IEEEtran.bst}
\typeout{}
\bibliography{IEEEabrv.bib, Bibliography_list.bib}

\end{document}

%% file: my_preamble.tex
\usepackage{amsmath,amsfonts}
\usepackage{algorithmic}
\usepackage{array}
\usepackage{textcomp}
\usepackage{stfloats}
\usepackage{url}
\usepackage{verbatim}
\usepackage{graphicx}
\def\BibTeX{{\rm B\kern-.05em{\sc i\kern-.025em b}\kern-.08em
		T\kern-.1667em\lower.7ex\hbox{E}\kern-.125emX}} 
\usepackage{balance} 

\usepackage{xargs}                  
 \usepackage{latexsym}
\usepackage[table,xcdraw]{xcolor}
\usepackage{tabularx}
\usepackage{amssymb, amsthm}
\usepackage{placeins}
\usepackage{mathtools}
\usepackage{multirow}
\usepackage{cite, enumerate}
\usepackage[inline,shortlabels]{enumitem}
\usepackage{dsfont}
\usepackage{stackengine,scalerel}
\usepackage{bm}
\usepackage{accents}
\usepackage[normalem]{ulem}
\usepackage{tikz}
\usepackage{circuitikz}
\usepackage{tikz-3dplot}
\usetikzlibrary{shapes.geometric, arrows, positioning, decorations.pathreplacing,decorations.pathmorphing,shapes,backgrounds,patterns,decorations.text}
\usepackage{tkz-euclide}
\tikzset{
    bs/.pic = {                                      
        \draw[line width = 1pt,-round cap] (0,0.4\R) -- (-0.2\R,-0.2\R);
        \draw[line width = 1pt,-round cap] (0,0.4\R) -- (0.2\R,-0.2\R);
        \draw[line width = 1pt] (-0.2\R,-0.2\R) -- (0.133\R,0);
        \draw[line width = 1pt] (0.2\R,-0.2\R) -- (-0.133\R,0);
        \draw[line width = 1pt,-round cap] (-0.133\R,0) -- (0.067\R,0.2\R);
        \draw[line width = 1pt,-round cap] (0.133\R,0) -- (-0.067\R,0.2\R);
        \draw[line width = 1pt,-round cap] (-0.213\R,0.4\R) -- (0.213\R,0.4\R);
        \draw[line width = 1pt,-round cap] \foreach \x in {-0.21, -0.14,...,0.22} {(\x\R,0.4\R) -- (\x\R,0.47\R)};
        \node (dim) at (0,0)  [align=center,minimum width=0.5\R,minimum height=1\R] {};
    },
}
\tikzset{
    user/.pic = {     
        \draw[rounded corners=0.02\R] (-0.09\R,-0.17\R) rectangle (0.09\R,0.17\R) ;                     
        \draw[fill=gray] (-0.08\R,-0.12\R) rectangle (0.08\R,0.12\R);                          
        \draw[rounded corners=0.005\R] (-0.04\R,0.14\R) rectangle (0.04\R,0.15\R);                     
        \draw (0,-0.14\R) circle (0.012\R) ; 
        \node (dim) at (0,0)  [align=center,minimum width=0.2\R,minimum height=0.3\R] {};
    }
}

\tikzset{
  drone/.pic = {
    \begin{scope}[scale=0.3]
      \fill[black] (-1.2,0) to[out=5,in=175] (1.2,0)
                     -- (0.7,-0.3) -- (-0.7,-0.3) -- cycle;
      \foreach \x in {-1.2,1.2} {
        \draw[very thick, black, line cap=round] (\x,0.1) -- (\x,0.4); 
        \draw[very thick, black, line cap=round] (\x-0.5,0.4) -- (\x+0.5,0.4); 
        \fill[black] (\x,0.4) circle (0.06); 
      }
      \fill[black] (-0.5,-0.3) -- (-0.9,-1) -- (-0.7,-1) -- (-0.3,-0.3) -- cycle;
      \fill[black] (0.5,-0.3) -- (0.9,-1) -- (0.7,-1) -- (0.3,-0.3) -- cycle;
      \fill[black] (-0.25,-0.75) rectangle (0.25,-1.25);
      \fill[red] (0,-1.0) circle (0.08);
    \end{scope}
  }
}

\tikzset{
    target/.pic = {
    \begin{scope}[scale=0.6]
        \tikzset{every path/.style={line width=1.3pt}}
        \draw
            (-1,0) -- (-0.75,0)
            .. controls (-0.65,0) and (-0.65,-0.1) .. (-0.6,-0.1)
            arc[start angle=270, end angle=450, radius=0.2cm]
            .. controls (-0.55,-0.1) and (-0.55,0) .. (-0.45,0)
            -- (0.45,0)
            .. controls (0.55,0) and (0.55,-0.1) .. (0.6,-0.1)
            arc[start angle=270, end angle=450, radius=0.2cm]
            .. controls (0.65,-0.1) and (0.65,0) .. (0.75,0)
            -- (1,0) -- (1,0.5) -- (-1,0.5) -- cycle;
        \draw[line width=1pt] (-0.6,0.5) -- (-0.4,0.9) -- (0.4,0.9) -- (0.6,0.5);
        \draw[line width=1pt] (-0.3,0.55) rectangle (0.0,0.85);
        \draw[line width=1pt] (0.1,0.55) rectangle (0.4,0.85);
        \fill (-0.6,-0.1) circle (0.02);
        \fill (0.6,-0.1) circle (0.02);
        \end{scope}
    }
}

\tikzset{
    repeater/.pic={
        \begin{scope}[scale=.78]
            \tikzset{every path/.style={line width=1.3pt}}
            \draw (-.6,-.2) rectangle (.6,.4);
            \draw (-.6,0) -- (-.6,.8);   
            \draw (.6,.4) -- (.6,.8);     

            \draw (-.9,1) -- (-.6,.8) -- (-0.3,1) -- cycle;
	     \draw (.3,1) -- (.6,.8) -- (0.9,1) -- cycle;
         \draw[->,thick, blue] (-.6,1) .. controls (-.4,1) and (-.4,1) .. (.25,1) node [xshift= -2mm, yshift=2mm]{$\alpha_{\sfu}$};
	     \draw[<-,thick, red] (-.4,.8) .. controls (-.4,.8) and (-.4,.8) .. (.4,.8) node [xshift= -4mm, yshift=-2mm]{$\alpha_{\sfd}$};
        \end{scope}
    }
}

\DeclareMathAlphabet\mathbfcal{OMS}{cmsy}{b}{n}

\usepackage{nicematrix}
\usepackage{arydshln}

\usepackage[capitalise,nameinlink]{cleveref}  
\usepackage{prettyref}
\crefformat{equation}{(#2#1#3)} 

\crefname{figure}{Fig.}{Figs.}
\crefname{section}{Sec.}{Secs.}
\crefname{algocf}{Algo.}{Algos.}
\Crefname{algocf}{Algorithm}{Algorithms}
\usepackage{float}
\usepackage[linesnumbered, ruled, vlined]{algorithm2e}
\SetCommentSty{mycommfont}

\SetCommentSty{xCommentSty}
\LinesNumbered
\SetSideCommentRight
\DontPrintSemicolon
\RestyleAlgo{algoruled}

\SetCommentSty{mycommfont}
\SetKwInput{KwInput}{Input}                
\SetKwInput{KwOutput}{Output}            
\SetKwInput{KwInt}{Initialization}            

\newenvironment{myfigure*}
{\begin{figure*}[!t]\centering}
	{\hrule\end{figure*}}

\newcommand{\mt}[1]{\mathsf{#1}}

\newcommand{\Elr}[1]{{ \mathbb{E} \left[#1  \right] }}

\newcommand{\bsr}[1]{{\left[ #1 \right]}}
\newcommand{\bbr}[1]{{\left\{ #1 \right\}}}

\newcommand{\vect}[1]{{\mathbf{ #1}}}

\newcommand{\ocirc}[1]{\ThisStyle{\ensurestackMath{%
  \stackon[.3pt]{\SavedStyle#1}{\SavedStyle\kern.05\LMpt\circ}}}}

\newcommand{\abslr}[1]{{\left\lvert {#1}\right\rvert}}
\newcommand{\abs}[1]{{\big\lvert {#1}\big\rvert}}

\def\vI{\vect{I}}

\def\sfd{{\mt{d}}}

\def\sfu{{\mt{u}}}

\def\vZ{\mathbf{0}}

\newtheorem{thm}{Theorem}

\newtheorem{prop}[thm]{Proposition}

\newtheorem{obs}{Observation}
\newtheorem{res}{Result}